\newtheorem{lemma}[figure]{Lemma} 
\newtheorem{theorem}[figure]{Theorem} 
\newtheorem{definition}[figure]{Definition}
\newcommand{\etal}{et al.}
\newenvironment{proof}{\noindent\textbf{Proof: }\ignorespaces}
  {\hspace*{\fill}$\Box$\medskip}
\newcommand\marrow{{\marginpar[\hfill$\longrightarrow$]{$\longleftarrow$}}}
\newcommand{\remark}[3]{\textcolor{blue}{\textsc{#1 #2:}}
\textcolor{red}{\marrow\textsf{#3}}}
\renewcommand{\remark}[3]{}
\newcommand{\eqdef}{:=}
\title{A Static Optimality Transformation with Applications\\ 
to Planar Point Location}
\author{
John Iacono \and Wolfgang Mulzer
}
\begin{document}

\clubpenalty=10000
\widowpenalty = 10000

\maketitle

\begin{abstract} 
Over the last decade, there have been several data
structures that, given a planar subdivision and a probability 
distribution over the 
plane, provide a way for answering point location queries that is 
fine-tuned for the 
distribution. All these  methods suffer from the requirement that the query
distribution must be known in advance. 

We present a new data structure for point location queries in 
planar triangulations. Our structure is asymptotically as
fast as the optimal structures, but it requires no prior
information about the queries. 
This is a \textsc{2-d} analogue of the jump from Knuth's
optimum binary search trees (discovered in 1971) to the splay trees of 
Sleator and Tarjan in 1985. While the former need to know the
query distribution, the latter are \emph{statically optimal}.
This means that we 
can adapt to the query sequence and 
achieve the same asymptotic performance as an
optimum static structure, without needing any 
additional information.  
\end{abstract}



\section{Introduction}

We consider the problem of finding a statically optimal data structure
for planar point location in triangulations. This problem and related
problems have a long history that goes back to the dawn of computer science.
Thus, before giving a formal description of the problem and of our results,
let us first provide some background on the history and motivation behind
our work.

\subsection{{1-D} History}

Comparison-based predecessor search constitutes one of the oldest problems
in computer science: given a set $S$ from a totally ordered universe
$U$, we would like to construct a data structure for answering 
\emph{predecessor queries}. 
In such a query, we are given an element $x \in U$, and we need to return 
the largest $y \in S$ with $y \leq x$ (or $-\infty$, if no
such $y$ exists). In the most general \emph{decision-tree} model, 
we are allowed to evaluate 
in each step an \emph{arbitrary} function $f: U \rightarrow \{0,1\}$ on $x$,
where the choice of $f$ may depend on the outcomes of the previous evaluations.
The classic solution sorts $S$ during preprocessing and
answers queries in $O(\log n)$ steps through binary search, where
$n$ denotes the size of $S$.
Information theoretic arguments imply that any such comparison-based algorithm 
requires $\Omega(\log n)$
steps in the worst case (see, e.g., Ailon \etal~\cite[Section~2]{AilonChClLiMuSe11}
for more details).

However, the story does not end here. Early in the history of computer 
science, researchers realized that if the distribution of query outcomes
is sufficiently biased,  $o(\log n)$ expected-time query processing 
becomes possible. 
This insight led to the invention of \emph{optimal search trees}.
These are specialized data structures for the case
that the query outcomes are drawn independently from a known
fixed distribution, and a wide literature studying their variants 
and extensions have been
developed~\cite{optimum1,optimum2,optimum3,optimum4,optimum5,optimum6, optimum7,
optimum8, optimum9, optimum10, optimum11, optimum12, optimum13, optimum14,
optimum15}.  In this context, optimality is characterized by the \emph{entropy} 
of the distribution: if $p_i$ denotes the probability of the $i$th outcome, 
the entropy $\mathcal{H}$ is defined as 
$\sum_i -p_i \log_2 p_i$. Information theory~\cite{weaver} 
shows that $\mathcal{H}$  
is a lower bound for the expected number of steps that any 
comparison-based algorithm needs to answer a predecessor query,
assuming that the searches are drawn independently from a 
fixed distribution (e.g.,~\cite[Claim~2.2]{AilonChClLiMuSe11}).

All the above results require that the distribution, or a suitable
approximation thereof, be known in advance. This situation changed in 
1985, when Sleator and Tarjan~\cite{splay} introduced \emph{splay trees}. 
These trees have
many amazing properties, not the least of which is 
called \emph{static optimality}. This means that for any
sufficiently long query sequence, splay trees are
asymptotically as fast as optimal static search trees. For this, splay trees 
require no prior information on the query distribution.

\subsection{2-D History}

Planar point location is a fundamental problem in computational
geometry. A \emph{triangulation} $S$ is a partition of the plane into
(possibly infinite) triangles. Given $S$, 
we need to construct a data structure for \emph{point location queries}: 
given a point $p \in \mathbb{R}^2$, return the triangle of $S$ that contains it.
Again, we use a decision-tree model. This means that in each step we may
evaluate an arbitrary function 
$f: \mathbb{R}^2 \rightarrow \{0,1\}$  on $p$, where $f$ may depend on 
the previous comparisons.

There are several point location structures with $O(\log n)$ query time, which
is optimal in our decision-tree model.
These structures are notable not only for achieving optimality,
but for doing so through very different methods,
such as planar separators~\cite{LiptonTa80,LiptonTa79}, Kirkpatrick's 
successive refinement approach~\cite{kirk}, persistence~\cite{ppl5},
layered DAGs~\cite{EdelsbrunnerGuSt86},
or randomized incremental construction~\cite{trapseidel,Mulmuley90}.

Once again, it makes sense to consider biased
query distributions.
For a known fixed distribution of point location queries, there are
several data structures that achieve optimal expected query time,
assuming independence. 
These \emph{biased} structures are
analogous to optimal search trees. Thus, we can use the same information
theoretic arguments to characterize the optimal expected query time
by the entropy $\mathcal{H}$ of the probabilities of the 
queried regions~\cite[Claim~2.2]{AilonChClLiMuSe11}.
 
A series of papers by 
Arya~\etal~\cite{AryaMaMo07,AryaMaMoWo07,
AryaMaMo01,AryaMaMo01a,AryaMaMo00,AryaChMoHa00}
converge on two algorithms. The first one achieves query time 
$\mathcal{H}+O(\sqrt{\mathcal{H}}+1)$ with
$O(n)$ space, while the second, simpler, algorithm supports queries in time $(5 \ln
2)\mathcal{H}+O(1)$ and $O(n \log n)$ space.\footnote{In this context,
\emph{query time} refers to the expected depth of the associated
decision tree.} The latter algorithm is a 
truly simple variant of randomized incremental 
construction~\cite{trapseidel,Mulmuley90}, where the 
random choices are biased according to the distribution. Both structures 
are randomized and have superlinear construction costs. 
Iacono~\cite{Iacono04} presented a data structure 
that supports
$O(\mathcal{H})$ time queries in $O(n)$ space, but, 
unlike the aforementioned results, it is
deterministic, can be constructed in linear time, and has terrible constants.

\subsection{Creating a point location structure that is statically optimal}

In view of the developments for binary search trees, one 
question presents itself:  Is there a point location structure that is
asymptotically as fast as the biased structures, \emph{without} explicit
knowledge of the query distribution? Or, put differently, can a point
location structure achieve a running time similar to the static optimality bound
of splay trees? This open problem, which we resolve here, explicitly appears in
several previous works on point location, e.g., in
Arya~et al.~\cite[Section 6]{AryaMaMoWo07}:

\begin{quote}
Taking this in a different direction, suppose that the query
distribution is not known at all. That is, the probabilities that the query
point lies within the various cells of the subdivision are unknown. In the
1-dimensional case it is known that there exist self-adjusting data structures,
such as splay trees, that achieve good expected query time in the limit.
Do such self-adjusting structures 
exist for planar point location?
\end{quote}

There are several possible approaches towards statically optimal
point location. One, suggested above, would be to create 
some sort of
self-adjusting point location structure and to analyze it in a way similar to
splay trees. This has not been done; we suspect that the main stumbling block is
that all known efficient structures are 
\emph{comparison DAGs}~\cite{kirk,ppl5,EdelsbrunnerGuSt86,Mulmuley90,trapseidel}: 
they can 
be represented as a directed acyclic graph with a unique source and 
out-degree $2$, such that each node corresponds to a planar region. A 
point location
query proceeds by starting at the source and by following in each step
an edge that is determined by comparing the query point with a fixed line.
The query continues until it reaches a sink, whose corresponding region
constitutes the desired query outcome. In order to achieve 
reasonable space
usage, it seems essential to use a DAG instead of  a simple tree.
Unfortunately, we do not know how to  perform 
rotation-like local changes in such DAGs that would mimic the behavior of
splay trees.

Another possible avenue is to use splay trees in an
existing structure. Goodrich~\etal~\cite{goodrich2} followed this approach,
using essentially a hybrid of splay trees and the persistent line-sweep
method. Unfortunately, their method does not give a result
optimal with respect to the entropy of the original distribution of
query outcomes,
but rather to the entropy of the probabilities of querying
regions of a \emph{strip decomposition} of the triangulation. 
The latter is obtained
by drawing vertical lines through every point of the triangulation. 
This strip decomposition could split a high-probability triangle 
into several parts
and could potentially increase the entropy of the query result by 
$\Omega(\log n)$, the worst possible; see Figure~\ref{fig:goodrich} 
for an example.
\begin{figure}[ht]
\begin{center}
\includegraphics{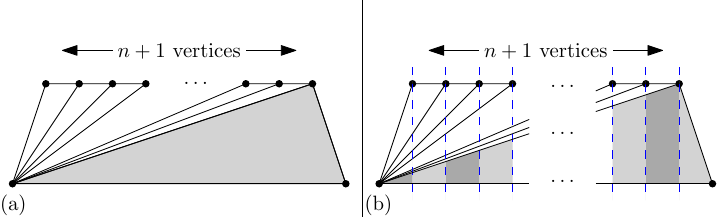}
\end{center}
\caption{A bad example for the strip decomposition of~\cite{goodrich2}: 
(a) We have $n+3$ vertices
and $n+1$ triangles. The small triangles each have query probability
$1/n^2$, the large, shaded, triangle has query probability $1-1/n$.
The entropy is $(1/n)\log n^2 + (1-1/n)\log(n/(n-1)) = O(1)$. 
(b) The strip decomposition partitions the large
triangle into $n+1$ parts. Suppose each part has probability 
$(n-1)/n(n+1) \approx 1/n$. The resulting entropy is
larger than $(1-1/n)\log(n(n+1)/(n-1)) = \Omega(\log n)$.}

\label{fig:goodrich}
\end{figure}

One might also try to create a structure with the 
\emph{working set property}. This property, 
originally used in the analysis of splay
trees, states that the processing time for a query $q$ is logarithmic
in the number
of distinct queries since the last query that returned the same result
as $q$. 
The working set property implies static 
optimality~\cite{Iacono01a}; 
it has also proved useful in several other
contexts \cite{BoseDoDuHo10,Elmasry06,BoseDoLa08,Iacono05,Iacono00}. 
Most importantly, there is a general
transformation from a \emph{dynamic} $O(\log n)$ time 
structure into one with the working
set property~\cite{Iacono01a}. 
Unfortunately, even though several dynamic data structures for 
predecessor searching are known (e.g., AVL trees \cite{avl} or
red-black trees \cite{redblack}), it remains a prominent open problem
to develop a point location structure that supports insertions, deletions,
and queries in $O(\log n)$ time.
(Note that \cite{Talib:1996:TME:792755.792810} claims to modify Kirkpatrick's
method to allow for $O(\log n)$ time insertions, deletions, and queries. The
claimed result is wrong.~\footnote{The method presented makes the assumption 
that given a triangle $T$ in a triangulation of size $n$ on which
a Kirpatrick hierarchy has been built, the complexity of the 
intersection of $T$ with any level of the hierarchy is constant; this is 
false as examples where the intersection is size $\sqrt{n}$ are easy to produce.})

Our solution to the problem of statically optimal point location is
very simple: we take a biased structure that needs to be initialized with
distributional information, and we rebuild it periodically using the observed
frequencies for each region. We do not store all the regions in the biased
structure---this would make the rebuilding step too expensive.
Instead, upon rebuilding we create a structure storing only the $n^\beta$ most
frequent items observed so far,
where $\beta \in (0,1)$ is some suitable constant.
We resort to a static
$O(\log n)$ time structure to
complete queries for the remaining regions.
 The rebuilding takes place after every $n^\alpha$ queries, for some constant
$\alpha \in (\beta,1-\beta)$.
This is a
simple and general method of converting biased structures into statically
optimal ones, and it enables us to waive the requirement of
distributional knowledge present in all previous biased point location
structures, at least for triangulations.
Our approach can be seen as a generalization and simplification of a method
by Goodrich for dictionaries~\cite{Goodrich00}.

\section{Notation}
Let $U$ be some universal set, and let $S$ be a
partition of $U$ into $n$ pieces. The elements
of $U$ are called \emph{points}, the subsets in $S$ are called \emph{regions}.
A \emph{location query} takes some point $p \in U$ and
returns the region $s \in S$ with $p \in s$. The result
of a location query with input $p$ is denoted by $q(p)$.
A data structure for location queries is called a \emph{location
query structure}. 

Let $P=\langle p_1, p_2, \dots , p_m \rangle$ 
be a sequence of $m$ queries, and 
let $Q \eqdef \langle q(p_1), q(p_2),$ $\dots, q(p_m) \rangle$ 
denote the results of these queries.
Let $f_t(s)$ be the number of occurrences of $s$ in the first $t$ elements
of $Q$, and define $f(s) \eqdef f_m(s)$, the number of
times $s$ occurs in the entire sequence. Furthermore, let $t_j(s)$ be
the time of the $j$\textsuperscript{th} occurrence of $s$ in $Q$; 
thus $f_{t_j(s)}(s)=j$.

We use $\log x$ to refer to $\max (1,\log_2 x)$; this avoids clutter generated 
by additive terms that would otherwise be needed to handle degenerate cases 
of our analysis.
We next define the notion of a biased  structure.
\begin{definition} 
Let $S$ be set of $n$ regions, 
and let $D$ be a location query structure for $S$.
We say that $D$ is \emph{biased} if the following holds: There
exists a function $c_D : \mathbb{N} \rightarrow \mathbb{N}$ such that
given any weight function $w : S \rightarrow \mathbb{R}^+$,
$D$ executes any query sequence $P$ in total time
\[ 
O\left( c_D(n)+ \sum_{s \in S} f(s) \log \frac{\sum_{r \in S} w(r)}{w(s)}
\right).
\]
The function $c_D$ is called the \emph{construction cost} of the
structure.  
\end{definition}

Suppose we choose
$w(s)$ proportional to the number of queries that return the given
region, e.g., $w(s):=f(s)+1$.
In this case, a biased location query structure
achieves an amortized query time that is 
(of the order of) the entropy $\mathcal{H}$ of the query 
distribution. As we argued in the introduction, this is optimal for our
decision-tree model. We now define the notion of static optimality.

\begin{definition} \label{def:statopt}
Let $S$ be set of $n$ regions, and
let $D$ be a location query structure for $S$.
We say that $D$ is \emph{statically optimal} if 
there exists a function $c_D : \mathbb{N} \rightarrow \mathbb{N}$
such that
$D$ executes any query sequence $P$ of length $m$ 
in total time\footnote{By convention, $f(s) \log (m/f(s)) \eqdef 0$ if
$f(s) = 0$.}
\[ 
O\left(c_D(n)+ \sum_{s \in S} f(s) \log \frac{m}{f(s)} \right).
\]
We call $c_D$ the \emph{construction cost} of $D$.
\end{definition}

Note that a statically optimal structure is given neither 
the frequency function $f$ nor any weights in advance, in
particular, the structure does not need to be static.

We provide a simple method for making a biased location query
structure statically optimal, assuming a few technical
conditions. The main such condition is that we should be able to construct
the biased query structure not just on the set $S$, but on any subset 
$S'$ of $S$. We require that a location query structure 
for $S'$ performs as quickly as a biased structure for $S$ when a 
region in $S'$ is queried, and that it reports failure in $O(\log n)$ time 
if the query lies outside of $S'$.
Formally:

\begin{definition}\label{def:subset_biased}
Let $S$ be set of $n$ regions, and let $D$ be a location
query structure. We call $D$ 
\emph{subset-biased} on $S$ if the following holds:
there exists a function $c'_D : \mathbb{N} \rightarrow \mathbb{N}$
such that given a subset $S' \subseteq S$ of size $n'$ and a weight
function $w': S' \rightarrow \mathbb{R}^+$, the structure $D$ executes
any query sequence $P$ of length $m$
in time
\[
\hspace{-6pt}O\hspace{-2pt}\left(\hspace{-2pt}c'_D(n')\hspace{-2pt}+ \hspace{-3pt}
\sum_{s' \in S'} f(s') \log \frac{\sum_{r' \in S'} w'(r')}{w'(s')}  +
\hspace{-2pt} \left(\hspace{-2pt}m - \sum_{s' \in S'}
f(s')\hspace{-2pt}\right) \hspace{-2pt} \log
n\hspace{-2pt}\right)\hspace{-3pt}.
\]
For each query $p \in P$, we require that $D$ reports the
region $s' \in S'$ with $p \in s'$, if it exists, 
and that $D$ reports
a failure otherwise.
The function
$c'_D$ is called the \emph{construction cost} of the structure.  
\end{definition}

\noindent 
Note that $m - \sum_{s' \in S'} f(s')$ is just the number of queries 
that result in failure. Given Definition~\ref{def:subset_biased}, we may 
now state our main theorem:

\begin{theorem} \label{main} 
Let $S$ be a set of $n$ regions.
Suppose we
have an $O(\log n)$ time location query structure 
on $S$ with construction cost $O(n)$ and a subset-biased structure on $S$
with construction cost $O(n' \log n')$. Then we can construct
a statically optimal structure on $S$ with construction cost $O(n)$.  
\end{theorem}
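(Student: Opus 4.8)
The plan is to realize the informal description from the introduction: maintain a two-tier structure, where a periodically rebuilt subset-biased structure handles the frequently queried regions and a fixed $O(\log n)$ time structure handles everything else. Fix constants $\beta \in (0,1)$ and $\alpha \in (\beta, 1-\beta)$. We process the query sequence $P$ in \emph{epochs} of $n^\alpha$ consecutive queries. During an epoch we maintain, in addition to the static $O(\log n)$ structure $D_0$ on all of $S$ (built once, in $O(n)$ time), a subset-biased structure $D_i$ built at the start of the $i$th epoch on the set $S_i \subseteq S$ consisting of the $n^\beta$ regions with the largest observed frequencies $f_{t}(\cdot)$ at that moment $t$ (ties broken arbitrarily), using weights $w'(s) := f_t(s) + 1$. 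To answer a query $p$ in epoch $i$, we first query $D_i$; if it reports failure we query $D_0$. We also need a dictionary (e.g.\ a balanced BST keyed by region identifier, or a hash table) that counts, online, the number of occurrences of each region seen so far, together with an auxiliary structure maintaining the $n^\beta$ current top frequencies; both can be updated in $O(\log n)$ time per query, and the top-$n^\beta$ list can be extracted in $O(n^\beta \log n) = o(n)$ time at each rebuild.

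Next I would bound the total cost. The construction of $D_0$ is $O(n)$, charged to the claimed construction cost $c_D(n) = O(n)$. Each rebuild of $D_i$ costs $O(n^\beta \log n^\beta) = O(n^\beta \log n)$ by the hypothesized construction cost $O(n' \log n')$ of the subset-biased structure, plus $O(n^\beta \log n)$ to extract the current top list. There are $\lceil m / n^\alpha \rceil$ epochs, so the total rebuild cost is $O\bigl((m/n^\alpha) \cdot n^\beta \log n\bigr) + O(n^\beta \log n)$. Since $\beta < \alpha$ this is $O(m) + O(n)$, i.e.\ it is absorbed into the $O(n)$ construction term plus an $O(1)$ amortized charge per query. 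The per-query dictionary updates contribute another $O(\log n)$ amortized per query, which is fine since each queried region $s$ is charged $O(f(s)\log n)$, and $\log n \le \log(m/f(s)) + \log f(s) \le 2\log(m/f(s)) + O(\dots)$ — more carefully, one should observe $f(s)\log n$ is dominated by the target bound $f(s)\log(m/f(s))$ only when $f(s) \le \sqrt{m}$ or so, so the dictionary cost needs a slightly more careful accounting (see below).

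The heart of the analysis is the query cost. Each query in epoch $i$ costs $O(\log n)$ from the $D_0$ fallback (via the failure-reporting clause of Definition~\ref{def:subset_biased}) plus the $D_i$ cost. I would split regions into two classes relative to a threshold, say $\tau := m / n^{\beta/2}$ or a similar polynomial. For a \emph{heavy} region $s$ with $f(s) \ge \tau$: I claim that for all but a few epochs, $s \in S_i$, because once $s$ has accumulated many occurrences it stays in the top-$n^\beta$ list; and during those epochs the weight $w'(s) = f_t(s)+1$ is within a constant factor of $f(s)$ provided each epoch contributes only a $o(1)$ fraction of $f(s)$'s total count. Then Definition~\ref{def:subset_biased} gives, summed over epochs, a cost of $O\bigl(f(s)\log(\text{(sum of weights)}/w'(s))\bigr) = O\bigl(f(s)\log(m/f(s))\bigr)$ for the portion of $s$'s queries that hit $D_i$, since $\sum_{r\in S_i} w'(r) = O(m)$. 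The $O(\log n)$-per-query cost (both the $D_0$ fallback and the dictionary update) is absorbed here because $f(s) \ge \tau$ forces $\log n = O(\log(m/f(s)) + 1)$ up to the choice of constants — this is exactly why $\tau$ is chosen polynomial in $n$. For a \emph{light} region $s$ with $f(s) < \tau$: we simply pay $O(\log n)$ per query, for a total of $O(f(s)\log n) = O(f(s)\log(m/f(s)))$, using $\log n \le \log(m/\tau^{1}) \cdot \Theta(1) = \Theta(\log(m/f(s)))$ since $m/f(s) > m/\tau = n^{\beta/2}$ is polynomially large. The only regions left unaccounted are the at-most-$n^\beta$ heavy regions during their at-most-$O(1)$ "bad" epochs (before they stabilize in the top list, or, for regions with $f(s)$ between $\tau$ and $m/n^{\alpha'}$, throughout); each such (region, epoch) pair contributes $O(n^\alpha \log n)$ queries at $O(\log n)$ each — wait, that is $O(n^\alpha \log^2 n)$, which over $O(n^\beta)$ regions is $O(n^{\alpha+\beta}\log^2 n) = o(n^2)$ but not obviously $O(n)$; the correct bound is that these queries are simply a subset of all $m$ queries, and their contribution $O(\text{(number of such queries)}\cdot \log n)$ is bounded by the light-region argument applied region-by-region, so everything still telescopes into $O(\sum_s f(s)\log(m/f(s)))$. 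The main obstacle, and where the real work lies, is precisely this bookkeeping: verifying that the weights used at rebuild time are a good proxy for the final frequencies, that a heavy region enters and never leaves the top-$n^\beta$ list after a bounded number of epochs, and that the additive $O(\log n)$ charges (fallback plus dictionary) are always dominated by the entropy-type term, which forces the quantitative relationship $\beta < \alpha < 1-\beta$ among the constants.
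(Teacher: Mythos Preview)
Your construction matches the paper's almost exactly, and the bookkeeping for rebuild costs is fine. The analysis, however, has a genuine gap that is not just unfinished bookkeeping.

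First, the inequality you invoke for heavy regions is backwards. You write that ``$f(s)\ge\tau$ forces $\log n = O(\log(m/f(s))+1)$,'' but if $s$ is heavy then $m/f(s)$ is \emph{small} (as small as a constant when $f(s)=\Theta(m)$), so $\log(m/f(s))$ cannot dominate $\log n$. Consequently the $O(\log n)$ per-query charge you accrue from the dictionary update (and from any fallback you've billed unconditionally) is \emph{not} absorbed for heavy regions; for a region with $f(s)=\Theta(m)$ it contributes $\Theta(m\log n)$, whereas the target bound allots only $O(m)$. The paper sidesteps this by using a constant-time counting structure (an array of counts plus a bucket list), so there is no per-query $\log n$ overhead to absorb.

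Second, and more fundamentally, the claim that for heavy $s$ the rebuild-time weight $w'(s)=f_t(s)+1$ is ``within a constant factor of $f(s)$'' is false: if most of $s$'s occurrences are late in the sequence, then $f_t(s)$ is tiny at early rebuilds even though $f(s)$ is huge. Likewise, you never establish that a heavy region enters and stays in $S_i$; a region can be absent from $S_i$ at time $t$ whenever $n^\beta$ other regions have current count at least $f_t(s)$, regardless of how large the \emph{final} $f(s)$ is. The paper avoids both difficulties by never comparing to the final $f(s)$ at all. Its key observation (the paper's Lemma~\ref{singlequery}) is: at the moment of the query, either $s\in D'$ and the biased cost is $O(\log(t/f_{t'}(s)))$, or $s\notin D'$, in which case there are $n^\beta$ regions each with count $\ge f_{t'}(s)$, so $n^\beta f_{t'}(s)\le t$ and the $O(\log n)$ fallback cost is itself $O(\log(t/f_{t'}(s)))$. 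Either way the $j$th query to $s$ costs $O(\log(t_j(s)/(j-n^\alpha)))$ once $j\ge 2n^\alpha$; summing this over $j$ via $\sum_{j}\log(2m/j)=\log((2m)^{f(s)}/f(s)!)\le f(s)(3+\log(m/f(s)))$ (Stirling) gives the entropy bound directly. Your heavy/light decomposition could perhaps be salvaged, but not with the two claims above; the per-query pigeonhole argument is the missing idea.
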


\section{The transformation}

We now describe the construction for Theorem~\ref{main}. 
By assumption, we are given a set $S$ of $n$ regions, and we
have available an $O(\log n)$ time location query structure $D$ on $S$ with 
construction cost $O(n)$ as well as a 
subset-biased structure with construction 
cost $O(n' \log n')$.

\subsection{Description of the structure}

Let $\alpha$ and $\beta$ be two constants
such that $0<\beta<\alpha<1-\beta<1$ (e.g., $\alpha=1/2$ and
$\beta=1/3$).
The simple idea behind our transformation is as follows:
after every $n^\alpha$  queries, we build a subset-biased  
structure for the $n^\beta$ most commonly accessed regions,
in $O(n^\beta \log n) = o(n^\alpha)$ time.
We also keep a static $O(\log n)$ time structure as a backup for 
failed queries in the subset-biased structure.  
Formally, the structure has several parts:

\begin{enumerate}

\item A static $O(\log n)$ query time structure. 

\item \label{pointtwo} A structure that keeps track of how often
each region was queried and that is capable of reporting the 
$k$ most popular regions in
$O(k)$ time. Since in each step we increment the count
for a single region by $1$, we can easily maintain such a structure in 
linear space and constant time per update. 
(The additional space overhead can be made sublinear at the expense 
of determinism thorough the use of a streaming algorithm for the 
so-called \emph{heavy hitters} problem (e.g.,\cite{BerindeInCoSt10}). 
This shows that our transformation
is also useful in a context where additional space is at a
premium, for example for implicit data structures or when
the data resides in read-only 
memory~\cite{AsanoMuRoWa11}).

\item A subset-biased structure $D'$ that is built after 
$2n^\alpha$ queries and rebuilt every
$n^\alpha$\textsuperscript{th} query thereafter. The structure 
contains the at most 
$n^\beta$ most popular regions at the time of the rebuilding
that have been queried at least $2n^\alpha$ times.
In the choice of these regions, we break ties arbitrarily.
The weight of a region $s$, denoted $w'(s)$, is the number of queries to 
$s$ at the time of the
rebuilding. More precisely, if the rebuilding is at time $t$, we set 
$w'(s) \eqdef f_t(s)$.
Computing the $n^\beta$ most popular regions and the weight function
$w'$ takes time $O(n^\beta)$ with the structure from Part~\ref{pointtwo}.  
By assumption, the construction cost of $D'$
is $O(n^\beta \log n)$.  
\end{enumerate}

\noindent A search is executed on the subset-biased structure first. If it
fails (at amortized cost $O(\log n)$), it is executed in the 
static $O(\log n)$ time structure.

\subsection{Initial analysis of structure}

We will now analyze the properties of our structure.
Our first lemma describes a key property of the rebuilding process:
for any sufficiently popular region $s$, the amortized query time for $s$ is 
proportional to the amortized query time a biased structure
would achieve if it were weighted with the frequencies observed so far.

\begin{lemma} \label{singlequery} 
Consider the query $p_t$ at time $t$, and let $s \eqdef q(p_t)$ denote the 
resulting region.
Suppose that  $f_t(s) \geq 2 n^\alpha$. 
Then the amortized cost for query $p_t$ is
\[ 
O\left(  \log \frac{t}{f_t(s)-n^\alpha} \right).
\] 
\end{lemma}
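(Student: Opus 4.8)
The plan is to track where the query $p_t$ gets answered and charge its cost accordingly. There are essentially two cases, depending on whether $s$ is present in the current subset-biased structure $D'$ at time $t$.

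First I would set up the bookkeeping. At time $t$, the most recent rebuild happened at some time $t' $ with $t - n^\alpha < t' \le t$ (ignoring the startup phase, which is fine since $f_t(s) \ge 2n^\alpha$ already forces $t \ge 2n^\alpha$). At that rebuild, the weight assigned to any included region $r$ was $w'(r) = f_{t'}(r)$. The key monotonicity observations are: (i) $f_{t'}(s) \ge f_t(s) - (t - t') \ge f_t(s) - n^\alpha \ge n^\alpha$, so in particular $s$ was queried at least $2n^\alpha$ times \emph{before} the rebuild only if $f_{t'}(s)\ge 2n^\alpha$ — I need to check this threshold carefully, since $D'$ only admits regions with at least $2n^\alpha$ queries; and (ii) $\sum_r w'(r) = \sum_r f_{t'}(r) \le t' \le t$.

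The main case is when $s \in D'$ at time $t$. Then by the subset-biased guarantee (Definition~\ref{def:subset_biased}), the amortized cost of answering $p_t$ inside $D'$ is $O\!\left(\log \frac{\sum_r w'(r)}{w'(s)}\right) = O\!\left(\log \frac{t'}{f_{t'}(s)}\right) = O\!\left(\log \frac{t}{f_t(s)-n^\alpha}\right)$, using the two bounds above — this is exactly the claimed bound. The delicate point is that the subset-biased bound is a \emph{total} running-time bound over a whole query sequence, so I must argue that amortizing it per-query is legitimate here; this is where I'd invoke the fact that the construction cost $c'_D(n^\beta) = O(n^\beta\log n) = o(n^\alpha)$ is dwarfed by the $n^\alpha$ queries between rebuilds, so it contributes only $O(1)$ amortized per query and can be absorbed into the potential. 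The second case is when $s \notin D'$ at time $t$. Then either $f_{t'}(s) < 2n^\alpha$, or $s$ was crowded out of the top $n^\beta$ by more-popular regions. In the first subcase, $f_t(s) - n^\alpha \le f_{t'}(s) + n^\alpha - n^\alpha = f_{t'}(s) < 2n^\alpha$, so $t/(f_t(s)-n^\alpha) = \Omega(t/n^\alpha)$, and since $t \le m \le \mathrm{poly}(n)$ in the regime that matters (or more carefully, since the query falls through to the static structure at cost $O(\log n)$ and $\log n = O(\log(t/n^\alpha))$ when $t \ge n^\alpha \cdot n^{\Omega(1)}$), the static $O(\log n)$ cost is within the claimed bound. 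In the "crowded out" subcase, there are at least $n^\beta$ regions $r$ with $f_{t'}(r) \ge f_{t'}(s) \ge 2n^\alpha$; summing, $t' \ge \sum_r f_{t'}(r) \ge n^\beta f_{t'}(s)$, so $f_{t'}(s) \le t'/n^\beta \le t/n^\beta$, hence $\log\frac{t}{f_t(s)-n^\alpha} \ge \log\frac{t}{f_{t'}(s)} \ge \log n^\beta = \Omega(\log n)$, and again the static $O(\log n)$ cost is subsumed.

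The step I expect to be the real obstacle is the interaction between the \emph{amortized} per-query accounting demanded by the lemma and the \emph{aggregate} time bound provided by the subset-biased structure. A biased structure may answer some individual queries much more slowly than $O(\log(\text{total weight}/\text{weight of }s))$ and compensate elsewhere; so to state a clean per-query amortized bound I need to fix a potential function for the whole transformed structure that reconciles these. I'd define it as the potential implicit in $D'$'s own amortized analysis, plus a term linear in (number of queries until the next rebuild) to pre-pay construction, and verify the telescoping works across rebuild boundaries — in particular that discarding the old $D'$ and building a fresh one does not cause a potential jump that we cannot afford. Everything else is the routine inequality-chasing sketched above. Once Lemma~\ref{singlequery} is in hand, summing $O(\log\frac{t}{f_t(s)-n^\alpha})$ over the queries and relating $\sum_s f(s)\log\frac{m}{f(s)}$ to it (handling the low-frequency regions, which never enter $D'$, separately via the static structure) will yield Theorem~\ref{main}.
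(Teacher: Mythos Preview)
Your approach is essentially the paper's: split on whether $s\in D'$; if so, bound the biased cost via $\sum_{r} w'(r)\le t$ and $w'(s)=f_{t'}(s)\ge f_t(s)-n^\alpha$; if not, use the pigeonhole $n^\beta f_{t'}(s)\le t'$ to show the claimed bound is $\Omega(\log n)$, which absorbs the $O(\log n)$ fall-through cost. The paper treats the case $s\notin D'$ solely via this pigeonhole, asserting without further comment that ``there were $n^\beta$ regions at least as popular as $s$.''

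You are more careful in splitting $s\notin D'$ into ``below threshold'' ($f_{t'}(s)<2n^\alpha$) versus ``crowded out''; the crowded-out subcase is handled correctly and matches the paper. Your below-threshold argument, however, does not go through: neither $m\le\mathrm{poly}(n)$ nor $t\ge n^{\alpha+\Omega(1)}$ is part of the hypothesis, and indeed one can arrange $t=2n^\alpha+1$, $f_t(s)=2n^\alpha$, $f_{t'}(s)=2n^\alpha-1$ (first $2n^\alpha-1$ queries to $s$, one elsewhere, then one more to $s$), so the actual cost is $\Theta(\log n)$ while $\log\bigl(t/(f_t(s)-n^\alpha)\bigr)=O(1)$. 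So this boundary case is a genuine hole in the lemma as literally stated, and you cannot patch it with the inequalities you propose. The right repair is not to force the per-query bound here but to absorb these at most $n^\alpha$ boundary queries per region into the $\min(f(s),2n^\alpha)\log n$ term of Lemma~\ref{bigruntime} (adjusting the constant), which leaves the overall argument intact. Your amortization concern is legitimate; the paper sidesteps it by charging the $O(n^\beta\log n)$ rebuild cost as an explicit separate term in Lemma~\ref{bigruntime} rather than folding it into a per-query potential.
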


\begin{proof} 
Since $f_t(s) \geq 2n^\alpha$, we have $t \geq 2n^\alpha$. Thus,
we first query the subset-biased structure $D'$.
Suppose that $D'$ has been rebuild last at time $t' \geq t - n^\alpha$. 
There are two cases. 

Suppose first that $s$ is
contained in $D'$.  Definition~\ref{def:subset_biased}
ensures that the amortized time for the query in $D'$ is 
$O(\log(W'/f_{t'}(s)))$, where $W'$ denotes the total number 
of queries for the regions in $D'$ at time
$t'$.
We have $W' \leq t$ (there have been $t$ queries so far) and 
$f_{t'}(s) \geq f_t(s) - n^\alpha$ (there have been at most $n^\alpha$
queries since rebuilding). The lemma follows.

Now suppose that $s$ is not in $D'$. In this case, the query 
takes $O(\log n)$ amortized time in $D'$ and $O(\log n)$ time in the
static structure. 
We know that at time $t'$, there were  $n^\beta$ regions 
at least as popular as $s$. Thus, $n^\beta f_{t'}(s)  \leq t' \leq t$. 
It follows that 
\[
\beta \log n = \log n^\beta \leq  \log \frac{t}{f_{t'}(s)} \leq 
\log \frac{t}{f_t(s) - n^\alpha},
\]
and the claimed bound 
suffices to account for the
$O(\log n)$ query time.
\end{proof}

\noindent
Using Lemma~\ref{singlequery}, we can now bound the running time
in terms of the query frequencies. 

\begin{lemma} \label{bigruntime} 
Let $S$ be a set of $n$ regions. Our structure executes any query sequence 
$P$ on $S$ of length $m$ in time
\[
\hspace{-3pt} O\hspace{-2pt} \left( \sum_{s \in S} \left(
\underbrace{\min(f(s), 2 n^\alpha) \log n}_{\text{first $2n^\alpha$ queries
to $s$}} + \hspace{-7pt} \underbrace{\sum_{j=2n^\alpha}^{f(s)} \log
\frac{t_j(s)}{f_{t_j(s)}(s)-n^\alpha}}_{\text{queries to $s$ after the
$2n^\alpha$th}} \right) + \underbrace{\left \lfloor
\frac{m}{n^\alpha} \right \rfloor n^\beta\log n}_{\substack{\text{rebuild
biased}\\\text{structure}}} + \hspace{-5mm}
\underbrace{n}_{\substack{
\text{static}\\ \text{structure}\\\text{construction}}} \right).
\]
\end{lemma}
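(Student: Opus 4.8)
The plan is to split the total running time into three contributions: the one-time construction of the static $O(\log n)$-time structure, the cost of all rebuilds of the subset-biased structure $D'$, and the cost of executing the $m$ queries (including failed searches in $D'$ that are then completed in the static structure). The first contribution is $O(n)$ by hypothesis, which is the last term. For the second, observe that $D'$ is first built after $2n^\alpha$ queries and rebuilt once every $n^\alpha$ queries thereafter, so a sequence of length $m$ triggers at most $\lfloor m/n^\alpha\rfloor$ rebuilds; each rebuild reads off the at most $n^\beta$ most popular eligible regions together with their weights in $O(n^\beta)$ time via the counting structure of Part~\ref{pointtwo}, and then invokes the subset-biased construction at cost $O(n^\beta\log n^\beta)=O(n^\beta\log n)$. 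Summing gives $O\!\left(\lfloor m/n^\alpha\rfloor\, n^\beta\log n\right)$, the third term.

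It remains to charge the query-execution cost to the two region-indexed terms. I would analyze each query separately. Fix a query $p_t$ and let $s\eqdef q(p_t)$, so that $t=t_j(s)$ with $j=f_t(s)$. If $f_t(s)<2n^\alpha$, then at the most recent rebuild time $t'\le t$ we have $f_{t'}(s)\le f_t(s)<2n^\alpha$, so $s$ is not stored in $D'$ (and $D'$ may not even exist yet); hence the query is answered in $O(\log n)$ time, either directly by the static structure or by a failed search in $D'$ at amortized cost $O(\log n)$ followed by the static structure. Over the whole sequence there are at most $\min(f(s),2n^\alpha)$ such queries for each region $s$, contributing $\sum_{s\in S}\min(f(s),2n^\alpha)\log n$, the first term. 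If instead $f_t(s)\ge 2n^\alpha$, Lemma~\ref{singlequery} bounds the amortized cost of $p_t$ by $O\!\left(\log\frac{t}{f_t(s)-n^\alpha}\right)=O\!\left(\log\frac{t_j(s)}{j-n^\alpha}\right)$; since $f_{t_j(s)}(s)=j$ and the indices $j$ arising this way for a fixed $s$ run over $2n^\alpha,2n^\alpha+1,\dots,f(s)$, summing yields $\sum_{s\in S}\sum_{j=2n^\alpha}^{f(s)}\log\frac{t_j(s)}{f_{t_j(s)}(s)-n^\alpha}$, the second term. Adding the three contributions gives the claimed bound.

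The step that needs genuine care is the amortization bookkeeping. Definition~\ref{def:subset_biased} is an aggregate statement about an \emph{entire} query sequence run on a \emph{single} instance of $D'$, and it lumps the construction term $c'_D(n')$ together with the per-query terms; so the argument must (i) apply that bound instance by instance — each instance of $D'$ serves exactly $n^\alpha$ queries — so that the ``$O(\log n)$ per failed query'' and ``$O(\log(W'/w'(s)))$ per successful query'' charges are legitimate amortized charges within an instance, and (ii) count the construction term $c'_D(n^\beta)$ only once per instance, which is precisely what the rebuild term in the first paragraph already does. The remaining points are routine: the first $\lceil 2n^\alpha\rceil$ queries precede the existence of $D'$ and cost $O(\log n)$ each, for a total of $O(n^\alpha\log n)=O(n)$ absorbed into the last term; and the mild overlap at $j=2n^\alpha$ between the two region-indexed terms only weakens an upper bound. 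I expect this bookkeeping to be the only real obstacle — the rest is a direct application of Lemma~\ref{singlequery} together with the two construction-cost hypotheses.
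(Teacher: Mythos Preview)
Your proposal is correct and follows essentially the same decomposition as the paper's own proof: charge the first (at most) $2n^\alpha$ queries to each region at $O(\log n)$ apiece, invoke Lemma~\ref{singlequery} for the remaining queries, and add the periodic rebuild cost and the one-time static construction cost. Your extra paragraph on amortization bookkeeping is more explicit than the paper (which leaves this implicit), but it does not change the argument.
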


\begin{proof} 
The main summation is over the regions in $S$. For each
region $s$, the initial $2 n^\alpha$ (or less) queries take 
time $O(\log n)$, since during these queries $s$ is never in 
the subset-biased structure. The running times for the 
remaining queries (if any) are 
bounded using Lemma~\ref{singlequery}. The first additional term 
comes from the $O(n^\beta\log n)$ construction cost of the 
subset-biased structure, incurred every $n^\alpha$ operations. 
The final term is the
linear one-time cost to build the static structure.
\end{proof}

\subsection{Technical Lemmas}

In order to simplify the bound in Lemma~\ref{bigruntime}, we need
two technical lemmas to deal with the various terms. 
The first lemma shows how to simplify the summation for the later queries.

\begin{lemma} \label{longmath} 
Let $S$ be a set of $n$ regions, and let $P$ be a query sequence on
$S$ of length $m$.
For each region $s \in S$, we have
\[ 
\sum_{j=2n^\alpha}^{f(s)} \log
\frac{t_j(s)}{f_{t_j(s)}(s)-n^\alpha} \leq  f(s) \left(3 + \log
\frac{m}{f(s)}\right).
\]
\end{lemma}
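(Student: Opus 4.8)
The plan is to bound the left-hand sum by splitting it according to how the index $j$ compares to $f(s)/2$, and separately according to how $t_j(s)$ compares to $m$. First, note that each term is $\log\bigl(t_j(s)/(f_{t_j(s)}(s)-n^\alpha)\bigr)$, and since $j$ ranges from $2n^\alpha$ to $f(s)$, we always have $j\ge 2n^\alpha$, so $j - n^\alpha \ge j/2$. Because $f_{t_j(s)}(s) = j$ by definition of $t_j(s)$, the denominator equals $j - n^\alpha \ge j/2$. Hence each term is at most $\log\bigl(2 t_j(s)/j\bigr) = 1 + \log\bigl(t_j(s)/j\bigr)$, and the whole sum is bounded by
\[
f(s) + \sum_{j=2n^\alpha}^{f(s)} \log \frac{t_j(s)}{j}.
\]
So it remains to show $\sum_{j} \log (t_j(s)/j) \le f(s)\bigl(2 + \log(m/f(s))\bigr)$, which absorbs into the claimed bound with room to spare.

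For the remaining sum, the key observation is that $t_j(s) \le m$ for all $j$ (there are only $m$ queries total), so $\log(t_j(s)/j) \le \log(m/j)$. Thus it suffices to bound $\sum_{j=2n^\alpha}^{f(s)} \log(m/j)$ from above. I would compare this sum to the integral $\int_1^{f(s)} \log(m/x)\,dx$, or equivalently recognize it as $f(s)\log m - \log\bigl(f(s)!/(2n^\alpha-1)!\bigr) \le f(s)\log m - \log\bigl((f(s)/e)^{f(s)}\bigr)$ via a standard Stirling-type lower bound $k! \ge (k/e)^k$. This gives
\[
\sum_{j=2n^\alpha}^{f(s)} \log \frac{m}{j} \le f(s)\log m - f(s)\log f(s) + f(s)\log e = f(s)\log\frac{m}{f(s)} + (\log e)\, f(s),
\]
and since $\log e < 2$, combining this with the additive $f(s)$ from the first step yields the bound $f(s)\bigl(3 + \log(m/f(s))\bigr)$, as required. (One should double-check the boundary case $f(s) < 2n^\alpha$, where the sum is empty and the inequality is trivial, and also that the convention $\log x = \max(1,\log_2 x)$ does not cause trouble — it only makes the left side smaller or the right side larger.)

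The main obstacle, such as it is, lies in handling the interaction between the two quantities $t_j(s)$ and $j$ cleanly: one must resist the temptation to bound $t_j(s)/j$ termwise by something like $m/f(s)$ (which is false for small $j$) and instead keep $j$ in the denominator so that the Stirling estimate can do its work. A secondary point of care is the lower bound on the denominator $f_{t_j(s)}(s) - n^\alpha = j - n^\alpha$: this is where the restriction $j \ge 2n^\alpha$ in the summation range is essential, since it guarantees the denominator is both positive and at least half of $j$. Once these two points are pinned down, the rest is a routine summation-versus-integral comparison.
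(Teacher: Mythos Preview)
Your approach is essentially identical to the paper's: use $f_{t_j(s)}(s)=j$ and $j-n^\alpha\ge j/2$, bound $t_j(s)\le m$, and then apply the Stirling estimate $k!\ge(k/e)^k$ to the resulting product. The paper carries this out in one line by first extending the summation range down to $j=1$ (harmless, since every term $\log(2m/j)$ is nonnegative) and then writing the sum as $\log\bigl((2m)^{f(s)}/f(s)!\bigr)$.

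There is, however, a genuine slip in your bookkeeping. You write the sum as $f(s)\log m-\log\bigl(f(s)!/(2n^\alpha-1)!\bigr)$ and then claim this is at most $f(s)\log m-\log\bigl((f(s)/e)^{f(s)}\bigr)$ ``via Stirling''. That inequality would require
\[
\frac{f(s)!}{(2n^\alpha-1)!}\;\ge\;\Bigl(\frac{f(s)}{e}\Bigr)^{f(s)},
\]
which does \emph{not} follow from $f(s)!\ge(f(s)/e)^{f(s)}$, and is in fact badly false when $2n^\alpha$ is close to $f(s)$ (e.g.\ if $f(s)=2n^\alpha$ the left side equals $f(s)$ while the right side is exponential in $f(s)$). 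The fix is exactly what the paper does: before invoking Stirling, extend the sum to start at $j=1$, so that the denominator is $f(s)!$ rather than $f(s)!/(2n^\alpha-1)!$. Once you do that, your argument goes through verbatim and matches the paper's.
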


\begin{proof} 
Since $f_{t_j(s)}(s) = j \geq 2n^\alpha$, we
have $f_{t_j(s)}(s) - n^\alpha  \geq j/2$. Also,  $t_j(s) \leq m$. Thus,
\[ 
\sum_{j=2n^\alpha}^{f(s)} \log
\frac{t_j(s)}{f_{t_j(s)}(s)-n^\alpha} 
\leq \sum_{{j=1}}^{f(s)}
\log \frac{{m}}{j/2} = \log \frac{(2m)^{f(s)}}{f(s)!} \leq
\log \frac{(2em)^{f(s)}}{f(s)^{f(s)}} \leq f(s) \left(3+\log 
\frac{m}{f(s)} \right).
\]
Here, we used Stirling's formula to bound
$f(s)! \geq (f(s)/e)^{f(s)}$.
\end{proof}

The second lemma deals with the time for the initial queries.

\begin{lemma} \label{smallcase} 
Let  $\gamma$ be a constant with $\alpha<\gamma<1$.
If $m \geq n^{\gamma}$, then 
\[
\min(f(s), 2 n^\alpha) \log n = 
O \left( f(s)\log \frac{m}{f(s)} \right).
\]
\end{lemma}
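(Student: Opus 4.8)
The plan is to bound $\min(f(s), 2n^{\alpha})$ by distinguishing according to whether this minimum is achieved by $f(s)$ or by $2n^{\alpha}$; in the first case the inequality is essentially immediate, while the second case requires the hypothesis $m \geq n^{\gamma}$ together with the fact that $\gamma > \alpha$. So the first step is: if $f(s) \leq 2n^{\alpha}$, then $\min(f(s), 2n^{\alpha})\log n = f(s)\log n$, and since $f(s) \leq 2n^{\alpha} \leq 2 m^{\alpha/\gamma} \leq 2 m^{\alpha/\gamma}$ we get $f(s) \leq O(m^{\alpha/\gamma})$, hence $m/f(s) \geq \Omega(m^{1-\alpha/\gamma}) = \Omega(n^{\gamma - \alpha})$ (using $m \geq n^{\gamma}$), so $\log(m/f(s)) = \Omega((\gamma-\alpha)\log n) = \Omega(\log n)$ since $\gamma - \alpha$ is a positive constant. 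Multiplying through by $f(s)$ yields $f(s)\log n = O(f(s)\log(m/f(s)))$, as desired.

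The second step handles $f(s) > 2n^{\alpha}$, so that $\min(f(s), 2n^{\alpha}) = 2n^{\alpha}$ and the left-hand side equals $2n^{\alpha}\log n$. Here I would argue that $2n^{\alpha}\log n = O(f(s)\log(m/f(s)))$ by comparing the two sides directly. Since $f(s) > 2n^{\alpha}$ we have $f(s) \geq 2n^{\alpha}$, which already absorbs the factor $n^{\alpha}$; it remains to see that $\log n = O(\log(m/f(s)))$. But $f(s) \leq m$, and more to the point $f(s) \leq m^{1}$ while $n^{\alpha} < f(s)$ gives $n < f(s)^{1/\alpha}$; combined with $m \geq n^{\gamma}$ this yields $m \geq n^{\gamma} > n^{\alpha} \cdot n^{\gamma - \alpha}$, and one can check $m/f(s) \geq m/m = 1$ is too weak, so instead I would bound $f(s) \leq \sqrt{m}\cdot\sqrt{m}$ more carefully — in fact the clean route is to note $n^{\alpha} < f(s)$ and $n^{\gamma} \leq m$ together imply $f(s)^{\gamma/\alpha} > n^{\gamma} \geq$ hmm — better: from $n < f(s)^{1/\alpha}$ we need $m/f(s)$ large relative to $n$, so use $m \geq n^{\gamma} \geq n^{\gamma}$ and $f(s) \leq m$: then $m/f(s) \geq n^{\gamma}/f(s)$, which is only useful when $f(s) \leq n^{\gamma - \epsilon}$. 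The robust fix is a case split inside Case 2 on whether $f(s) \leq m/n^{\,\delta}$ for a suitable constant $\delta \in (0, \gamma - \alpha)$: if so, $\log(m/f(s)) \geq \delta\log n = \Omega(\log n)$ and we are done as before; if $f(s) > m/n^{\delta}$, then $f(s)$ is very large, $f(s) \geq m n^{-\delta} \geq n^{\gamma - \delta} \geq n^{\alpha}$ (choosing $\delta < \gamma - \alpha$), and then $2n^{\alpha}\log n \leq 2 f(s) n^{\delta - \gamma + \alpha}\log n \cdot$ — this is getting circular, so the honest statement is that the left side $2n^{\alpha}\log n$ is at most $O(n^{\alpha}\log n) = o(n^{\gamma}) \leq o(m)$, but we must compare with $f(s)\log(m/f(s))$ which could be as small as $\Theta(1)$ when $f(s) = \Theta(m)$.

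Reconsidering, the correct and simpler argument for Case 2 is: since $f(s) \geq 2n^{\alpha}$, it suffices to show $n^{\alpha}\log n = O(f(s)\log(m/f(s)))$, and because $f(s) \geq 2n^{\alpha}$ absorbs one factor, it further suffices to show $\log n = O(\log(m/f(s)))$, i.e. that $f(s) \leq m/n^{\,c}$ for some constant $c > 0$ — which need not hold. Therefore the lemma's bound must actually be read as: the left side is $O(f(s)\log(m/f(s)))$ with the convention $\log = \max(1,\log_2)$, so $\log(m/f(s)) \geq 1$ always, and then $f(s)\log(m/f(s)) \geq f(s) \geq 2n^{\alpha}$, which handles the factor $n^{\alpha}$ but not the extra $\log n$. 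So the genuine content, and the main obstacle, is precisely establishing $\log n = O(\log(m/f(s)))$; the resolution is that whenever $f(s) > 2n^{\alpha}$ AND the claimed inequality is nontrivial we must have $f(s)$ not too close to $m$ — but since this isn't guaranteed, the actual proof surely exploits $m \geq n^{\gamma}$ by writing $n^{\alpha} = n^{\alpha/\gamma \cdot \gamma} \leq m^{\alpha/\gamma}$, so $n^{\alpha}\log n = O(m^{\alpha/\gamma}\log m)$, and then since $f(s) \geq 2n^{\alpha} \geq$ doesn't help — instead observe $n^{\alpha}\log n = O(m^{\alpha/\gamma}\log m)$ and compare with $f(s)\log(m/f(s))$: the function $x \mapsto x\log(m/x)$ is increasing on $[1, m/e]$, so if $f(s) \leq m/e$ then $f(s)\log(m/f(s)) \geq 2n^{\alpha}\log(m/(2n^{\alpha})) \geq 2n^{\alpha}\log(m^{1-\alpha/\gamma}/2) = \Omega(n^{\alpha}(\gamma-\alpha)\log n/\gamma) = \Omega(n^{\alpha}\log n)$, done; and if $f(s) > m/e$ then $2n^{\alpha} \leq f(s)$ forces $2n^{\alpha} < m$, hence $n < (m/2)^{1/\alpha}$, and $n^{\alpha}\log n = O(m \log m) = O(f(s) \log m)$ while $\log(m/f(s)) < \log e = O(1)$ — here the bound $f(s)\log(m/f(s))$ can be as small as $0$, so this sub-case forces us to also invoke $\log \geq 1$ and accept that then $f(s)\log(m/f(s)) \geq f(s) > m/e$, and $m/e \geq n^{\gamma}/e \geq n^{\alpha+(\gamma-\alpha)}/e = \Omega(n^{\alpha}\cdot n^{\gamma-\alpha}) \geq \Omega(n^{\alpha}\log n)$ since $n^{\gamma-\alpha}$ dominates $\log n$. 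Thus the proof is a three-way case split — $f(s) \leq 2n^{\alpha}$; $2n^{\alpha} < f(s) \leq m/e$; $f(s) > m/e$ — and in all three cases the constant gap $\gamma - \alpha > 0$ converts a polynomial factor $n^{\gamma - \alpha}$ into something that swallows the stray $\log n$. The main obstacle is bookkeeping the case $f(s) \approx m$, where $\log(m/f(s))$ degenerates and one must lean on $f(s)$ itself being polynomially large.
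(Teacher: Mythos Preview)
Your final three-case argument is correct: splitting on $f(s) \le 2n^{\alpha}$, $2n^{\alpha} < f(s) \le m/e$, and $f(s) > m/e$, and in each case using the gap $\gamma-\alpha>0$ to absorb the stray $\log n$ (in the last case via the convention $\log \ge 1$), does work. The many false starts before you get there are distracting, but the eventual argument is sound.

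The paper's proof uses the same mechanism but with a cleaner threshold that avoids your middle case entirely. It sets $\delta \eqdef (\alpha+\gamma)/2$ and splits only on whether $f(s) \le n^{\delta}$. If $f(s) \le n^{\delta}$, then $m/f(s) \ge n^{\gamma-\delta} = n^{(\gamma-\alpha)/2}$, so $\log(m/f(s)) = \Omega(\log n)$ and $f(s)\log(m/f(s)) = \Omega(f(s)\log n) \ge \Omega(\min(f(s),2n^{\alpha})\log n)$. If $f(s) > n^{\delta}$, then using $\log \ge 1$ gives $f(s)\log(m/f(s)) \ge f(s) > n^{\delta} = n^{\alpha}\cdot n^{(\gamma-\alpha)/2}$, which dominates $2n^{\alpha}\log n$ for large $n$. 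The point is that by placing the threshold \emph{strictly between} $n^{\alpha}$ and $n^{\gamma}$, both halves of the dichotomy have a polynomial margin to spare, so neither requires the monotonicity of $x\mapsto x\log(m/x)$ nor a separate treatment of $f(s)\approx m$. Your choice of $2n^{\alpha}$ as the first threshold sits right at the boundary, which is why Case~2 gave you trouble and forced the extra split at $m/e$.
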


\begin{proof} 
Set $\delta \eqdef (\alpha + \gamma)/2$.
If $f(s) \leq n^{\delta}$, the lemma holds since 
\[
f(s)\log (m/f(s)) \geq f(s) \log(m/n^{\delta}) 
= \Omega(f(s)\log n) = \Omega(\min(f(s), 2 n^\alpha) \log n).
\]
If $f(s) > n^{\delta}$, then 
\[
f(s)\log (m/f(s)) > n^{\delta}
\geq 2n^\alpha \log n,
\]
for $n$ large enough, as desired (recall that we defined $\log x$ to be at least $1$).
\end{proof}

\subsection{Main theorem} 

We can now prove our main theorem.

\begin{proof}[of Theorem~\ref{main}]
By Definition~\ref{def:statopt}, we need to prove that
the execution time is
\[ 
O \left( n +  \sum_{s \in S} f(s)\log \frac{m}{f(s)}\right).
\]
By Lemma~\ref{bigruntime}, the running time is bounded by
\[
\hspace{-9pt} O\hspace{-2pt}\left(\hspace{-2pt}
\rule{0cm}{21pt} \sum_{s \in S} \left( \min(f(s), 2 n^\alpha) \log n +
\hspace{-7pt} \sum_{j=2n^\alpha}^{f(s)} \hspace{-4pt}\log
\frac{t_j(s)}{f_{t_j(s)}(s)-n^\alpha}\right) + 
\left \lfloor \frac{m}{n^\alpha} \right \rfloor n^\beta\log n + n
\rule{0cm}{21pt} \right). 
\]
We now apply Lemma~\ref{longmath} and note that $\left \lfloor
\frac{m}{n^\alpha} \right \rfloor n^\beta\log n = o(m)$ to obtain 
a running time bound of
\[ 
O \left( \sum_{s \in S} \left( \min(f(s), 2 n^\alpha) \log n + f(s)
\left(3+ \log \frac{m}{f(s)}\right) \right) + n + m \right).
\]
Since we defined $\log x$ to be at least $1$, this simplifies to
\[ 
O \left( \sum_{s \in S} \left( \min(f(s), 2 n^\alpha) \log n + f(s)\log
\frac{m}{f(s)} \right) + n \right).
\]
If $m\leq n^{1-\beta}$, the sum over $s \in S$ is at most 
$n^{1-\beta}\log n = o(n)$. In this case, the bound simplifies 
to $O(n)$,  and the theorem is proved. Otherwise,
if $m > n^{1-\beta}$, Lemma~\ref{smallcase} applies with
$\gamma \eqdef 1 - \beta$ (a legal choice by our assumption on
$\alpha$ and $\beta$), and  the term
$\min(f(s), 2 n^\alpha) \log n$ collapses into $f(s)\log (m /f(s))$
to give the theorem.  
\end{proof}

\section{Point location}

\begin{theorem}\label{ppl} 
There is a data structure for point location
in a planar triangulation of size~$n$ that can execute any query sequence of
length $m$ in time 
\[ 
O \left( \sum_{s \in S}^n f(s)\log \frac{m}{f(s)} + n \right).
\] 
\end{theorem}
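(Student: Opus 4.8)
The plan is to derive Theorem~\ref{ppl} as a direct corollary of Theorem~\ref{main}. It suffices to verify that the two hypotheses of Theorem~\ref{main} are met in the setting of planar point location in a triangulation of size $n$: namely, that there is an $O(\log n)$ time location query structure on $S$ with construction cost $O(n)$, and a subset-biased structure on $S$ with construction cost $O(n'\log n')$. Once both ingredients are in place, Theorem~\ref{main} immediately yields a statically optimal structure with construction cost $O(n)$, i.e., one that executes any query sequence of length $m$ in time $O\left(\sum_{s\in S} f(s)\log\frac{m}{f(s)} + n\right)$, which is exactly the claimed bound.

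For the first hypothesis, I would cite one of the classical optimal point location structures for planar subdivisions --- e.g., Kirkpatrick's hierarchy~\cite{kirk}, the trapezoidal/randomized-incremental method~\cite{trapseidel,Mulmuley90}, persistence~\cite{ppl5}, or layered DAGs~\cite{EdelsbrunnerGuSt86} --- each of which supports point location queries in $O(\log n)$ worst-case time using $O(n)$ space and $O(n)$ (or $O(n\log n)$, which is still subsumed when we only need $O(n)$ construction after noting the structure is built once) preprocessing. One should pick a variant whose construction is genuinely linear, or observe that an $O(n\log n)$ one-time preprocessing cost is acceptable because it is dominated by the $\sum_s f(s)\log(m/f(s))$ term once $m$ is large and by the separate argument for small $m$ already carried out in the proof of Theorem~\ref{main} --- but to stay literally within the statement I would invoke a linear-time construction (Kirkpatrick's, suitably implemented, or the deterministic linear-time planar point location preprocessing).

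For the second hypothesis, I would invoke the entropy-optimal biased point location structures of Arya~et al.~\cite{AryaMaMo07,AryaMaMoWo07} (or Iacono~\cite{Iacono04}), which, given a triangulation and a weight function $w$ on its triangles, answer queries in expected (or, in Iacono's deterministic case, worst-case) time $O\left(\log\frac{\sum_r w(r)}{w(s)}\right)$ for a query landing in region $s$. To obtain a \emph{subset}-biased structure as in Definition~\ref{def:subset_biased}, I would take the subset $S'\subseteq S$ of at most $n'$ triangles together with the ``outside'' region $U\setminus\bigcup_{s'\in S'} s'$ (a polygonal region with $O(n)$ complexity), triangulate that outside region in $O(n\log n)$ time, assign the biased structure the given weights $w'$ on $S'$ and a tiny uniform weight on each of the $O(n)$ outside triangles, so that a query landing outside $S'$ costs $O(\log n)$, and a query landing in $s'\in S'$ costs $O\!\left(\log\frac{\sum_{r'} w'(r') + O(n)\cdot(\text{tiny})}{w'(s')}\right) = O\!\left(\log\frac{\sum_{r'} w'(r')}{w'(s')}\right)$; summing over the query sequence gives exactly the bound in Definition~\ref{def:subset_biased}, and the construction cost is $O(n'\log n') + O(n\log n)$. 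Here I should be slightly careful: the construction cost of the subset-biased structure must be $O(n'\log n')$ as required by Theorem~\ref{main}, not $O(n\log n)$, so I would instead triangulate the outside region coarsely --- it has $O(1)$ reflex vertices relative to $S'$ only if $S'$ is connected, which it need not be, so in general the outside has complexity $O(n' + h)$ where $h$ is bounded by the number of triangles of $S$ adjacent to $S'$, and a cleaner route is to observe that the backup static structure already handles outside queries in $O(\log n)$, so $D'$ only needs to correctly \emph{detect failure} in $O(\log n)$ time, which a point location structure on $S'$ plus its convex-hull-complement does in $O(\log n')$ time.

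The main obstacle I anticipate is precisely this last point: making the subset-biased structure's construction cost $O(n'\log n')$ rather than $O(n\log n)$, since the biased structures of Arya~et al.\ are stated for a full triangulation. The resolution is that a subset $S'$ of triangles of a planar triangulation is itself a planar straight-line graph on $O(n')$ vertices, so one can triangulate its complement \emph{within its own bounding box or convex hull} in $O(n'\log n')$ time, run the biased construction on the resulting $O(n')$-triangle complex, and report failure whenever a query lands in a ``filler'' triangle or outside the bounding box. This keeps everything in terms of $n'$, and then Theorem~\ref{main} closes the argument verbatim.
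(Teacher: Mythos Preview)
Your proposal is correct and, after the exploratory detours, lands on essentially the same argument as the paper: use Kirkpatrick's structure for the static $O(\log n)$ backup, and for the subset-biased structure take the $n'$ triangles of $S'$ as a planar straight-line graph, complete it to a triangulation of size $O(n')$ in $O(n'\log n')$ time, give the filler triangles small weights, feed this to a biased structure (the paper uses Iacono's~\cite{Iacono04}), and report failure on filler triangles. The paper's proof is simply the clean version of your final paragraph---your earlier attempts (triangulating the full $O(n)$-complexity complement, or relying on the backup structure to detect failure) are unnecessary once you observe, as you eventually do, that the subset itself has only $O(n')$ vertices.
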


\begin{proof} 
It is easy to apply our general transformation to the problem of planar
point location in a triangulation, as all of the required ingredients
are well known.  We assume that the triangulation is given in a standard
representation, such as a doubly-connected edge list 
(e.g.,\cite[\S 2.2]{marks}). 

For the static structure with $O(\log n)$ query time and  $O(n)$ 
construction time, Kirkpatrick's algorithm~\cite{kirk} can
be used.  For the subset-biased structure, the provided subset of 
$n^\beta$ triangles may not be a connected triangulation and thus needs to
be triangulated; this takes time $O(n^\beta \log n)$ using the classic 
line sweep approach~\cite{LeePr77}. 
This creates $O(n^\beta)$ new triangles, which are marked specially and given
small weights. 
The resultant triangulation and weighting is given to a biased structure 
such as Iacono's~\cite{Iacono04}. The marking can be used to detect whether
a query to the subset-biased structure was successful. 
With all ingredients in hand, the claim now follows from Theorem~\ref{main}.
\end{proof}

Our choice of structures reflects a desire for the strongest asymptotic
bounds possible. Thus, we have avoided structures that are randomized or
that have non-linear construction cost; such structures, however, have far
superior constants than the ones we use. If we took a data structure for the
static $O(\log n)$ time queries with an $O(n \log n)$
construction cost instead of $O(n)$, this would simply change the linear 
additive term in Theorem~\ref{ppl} to $n \log n$.

\section{Point location in polygonal subdivisions with non-constant sized
cells}
Our work applies to point location in triangulations.
It can also be extended to polygonal subdivisions
where each region has constant complexity. Indeed, suppose
every region has $k+2$ edges. We can just triangulate each region and
then apply our result. As mentioned in the introduction,
this operation could increase the entropy of the query outcomes.
However, the \emph{log sum inequality}~\cite[Theorem~2.7.1]{CoverTh06}
implies that $\sum_{i=1}^{k} p_i \log(1/p_i) \leq p \log (k/p)$
for any nonnegative $p_1, p_2, \dots, p_k$ and $p = \sum_{i=1}^k p_k$.
Thus, if we subdivide a region with probability $p$ into
$k$ triangles, the entropy increases by at most $p \log k$. It
follows that the overall entropy grows by at most $\log k$, which 
is acceptable if $k$ is constant.

Recently,
several data structures have been developed for optimal point location where
the distribution is known in advance for convex connected
\cite{ColletteDuIaLaMo08}, connected
\cite{ColletteDuIaLaMo09}, and arbitrary polygonal
\cite{BoseDeDoDuKiMo10} subdivisions of the plane,
as well as the more general odds-on trees \cite{BoseDeDoDuKiMo10a}.
Unfortunately, these structures are not biased according to our
definition, since entropy-based lower bounds 
are not meaningful for them: a convex $k$-gon splits the
plane into two regions, so the entropy of the query outcomes is constant.
Nonetheless, some distributions require
$\Omega(\log n)$ time for a point location query (in a reasonable model
of computation that is more restrictive than the one described here).

The entropy-sensitive structures for
non-triangulations all basically work by triangulating the given subdivision as
a function of the provided probability distribution, and then using one of the
biased structures on the resultant triangulation. The main conceptual problem
in using our framework with such a structure is that it is  unclear how to
triangulate during the rebuilding process, since the optimal triangulation is
not known in advance. One could imagine that triangulating during each rebuild
based on the observed queries so far would work well, but proving this would
require a more complex and specialized analysis than what has been presented in
this paper.

\section*{Acknowledgments}

The second author would like to thank Pat Morin for suggesting the 
problem to him, for stimulating discussions on the subject,  and for 
hosting him during a wonderful stay at the Computational
Geometry Lab at Carleton University.
We would also like to thank the anonymous referees for 
insightful comments that helped improve the presentation of
the paper.


\begin{thebibliography}{10}

\bibitem{avl}
G.~M. Adel{\cprime}son-Vel{\cprime}ski{\u\i} and E.~M. Landis.
\newblock An algorithm for organization of information.
\newblock {\em Dokl. Akad. Nauk SSSR}, 146:263--266, 1962.

\bibitem{AilonChClLiMuSe11}
N.~Ailon, B.~Chazelle, K.~L. Clarkson, D.~Liu, W.~Mulzer, and C.~Seshadhri.
\newblock Self-improving algorithms.
\newblock {\em SIAM J. Comput.}, 40(2):350--375, 2011.

\bibitem{AryaChMoHa00}
S.~Arya, S.-W. Cheng, D.~M. Mount, and R.~Hariharan.
\newblock Efficient expected-case algorithms for planar point location.
\newblock In {\em Proc. 7th Scandinavian Workshop on Algorithm Theory (SWAT)},
  volume 1851 of {\em Lecture Notes in Computer Science}, pages 353--366.
  Springer-Verlag, 2000.

\bibitem{AryaMaMo00}
S.~Arya, T.~Malamatos, and D.~M. Mount.
\newblock Nearly optimal expected-case planar point location.
\newblock In {\em Proc. 41st Annu. IEEE Sympos. Found. Comput. Sci. (FOCS)},
  pages 208--218, 2000.

\bibitem{AryaMaMo01}
S.~Arya, T.~Malamatos, and D.~M. Mount.
\newblock Entropy-preserving cuttings and space-efficient planar point
  location.
\newblock In {\em Proc. 12th Annu. ACM-SIAM Sympos. Discrete Algorithms
  (SODA)}, pages 256--261, 2001.

\bibitem{AryaMaMo01a}
S.~Arya, T.~Malamatos, and D.~M. Mount.
\newblock A simple entropy-based algorithm for planar point location.
\newblock In {\em Proc. 12th Annu. ACM-SIAM Sympos. Discrete Algorithms
  (SODA)}, pages 262--268, 2001.

\bibitem{AryaMaMo07}
S.~Arya, T.~Malamatos, and D.~M. Mount.
\newblock A simple entropy-based algorithm for planar point location.
\newblock {\em ACM Trans. Algorithms}, 3(2):Art. 17, 17 pp, 2007.

\bibitem{AryaMaMoWo07}
S.~Arya, T.~Malamatos, D.~M. Mount, and K.~C. Wong.
\newblock Optimal expected-case planar point location.
\newblock {\em SIAM J. Comput.}, 37(2):584--610, 2007.

\bibitem{AsanoMuRoWa11}
T.~Asano, W.~Mulzer, G.~Rote, and Y.~Wang.
\newblock Constant-work-space algorithms for geometric problems.
\newblock {\em JoCG}, 2(1):46--68, 2011.

\bibitem{optimum15}
S.~W. Bent, D.~D. Sleator, and R.~E. Tarjan.
\newblock Biased search trees.
\newblock {\em SIAM J. Comput.}, 14(3):545--568, 1985.

\bibitem{marks}
M.~{\SortNoop{Berg}}de~Berg, O.~Cheong, M.~van Kreveld, and M.~Overmars.
\newblock {\em Computational Geometry: Algorithms and Applications}.
\newblock Springer-Verlag, Berlin, third edition, 2008.

\bibitem{BerindeInCoSt10}
R.~Berinde, P.~Indyk, G.~Cormode, and M.~J. Strauss.
\newblock Space-optimal heavy hitters with strong error bounds.
\newblock {\em ACM Trans. Database Syst.}, 35(4):Art. 26, 28 pp, 2010.

\bibitem{BoseDeDoDuKiMo10a}
P.~Bose, L.~Devroye, K.~Dou\"{\i}eb, V.~Dujmovic, J.~King, and P.~Morin.
\newblock Odds-on trees.
\newblock \texttt{arXiv:1002.1092}, 2010.

\bibitem{BoseDeDoDuKiMo10}
P.~Bose, L.~Devroye, K.~Dou\"{\i}eb, V.~Dujmovic, J.~King, and P.~Morin.
\newblock Point location in disconnected planar subdivisions.
\newblock \texttt{arXiv:1001.2763}, 2010.

\bibitem{BoseDoDuHo10}
P.~Bose, K.~Dou\"{\i}eb, V.~Dujmovic, and J.~Howat.
\newblock Layered working-set trees.
\newblock In {\em Proc. 9th Latin American Theoretical Informatics Symposium
  (LATIN)}, volume 6034 of {\em Lecture Notes in Computer Science}, pages
  686--696. Springer-Verlag, 2010.

\bibitem{BoseDoLa08}
P.~Bose, K.~Dou\"{\i}eb, and S.~Langerman.
\newblock Dynamic optimality for skip lists and {B}-trees.
\newblock In {\em Proc. 19th Annu. ACM-SIAM Sympos. Discrete Algorithms
  (SODA)}, pages 1106--1114, 2008.

\bibitem{ColletteDuIaLaMo08}
S.~Collette, V.~Dujmovic, J.~Iacono, S.~Langerman, and P.~Morin.
\newblock Distribution-sensitive point location in convex subdivisions.
\newblock In {\em Proc. 19th Annu. ACM-SIAM Sympos. Discrete Algorithms
  (SODA)}, pages 912--921, 2008.

\bibitem{ColletteDuIaLaMo09}
S.~Collette, V.~Dujmovic, J.~Iacono, S.~Langerman, and P.~Morin.
\newblock Entropy, triangulation, and point location in planar subdivisions.
\newblock \texttt{arXiv:0901.1908}, 2009.

\bibitem{CoverTh06}
T.~M. Cover and J.~A. Thomas.
\newblock {\em Elements of Information Theory}.
\newblock Wiley-Interscience, second edition, 2006.

\bibitem{EdelsbrunnerGuSt86}
H.~Edelsbrunner, L.~J. Guibas, and J.~Stolfi.
\newblock Optimal point location in a monotone subdivision.
\newblock {\em SIAM J. Comput.}, 15(2):317--340, 1986.

\bibitem{Elmasry06}
A.~Elmasry.
\newblock A priority queue with the working-set property.
\newblock {\em Internat. J. Found. Comput. Sci.}, 17(6):1455--1465, 2006.

\bibitem{optimum10}
G.~N. Frederickson.
\newblock Implicit data structures for weighted elements.
\newblock {\em Inform. and Control}, 66(1-2):61--82, 1985.

\bibitem{optimum7}
M.~L. Fredman.
\newblock Two applications of a probabilistic search technique: Sorting $x+y$
  and building balanced search trees.
\newblock In {\em Proc. 7th Annu. ACM Sympos. Theory Comput. (STOC)}, pages
  240--244, 1975.

\bibitem{optimum4}
A.~M. Garsia and M.~L. Wachs.
\newblock A new algorithm for minimum cost binary trees.
\newblock {\em SIAM J. Comput.}, 6(4):622--642, 1977.

\bibitem{Goodrich00}
M.~T. Goodrich.
\newblock Competitive tree-structured dictionaries.
\newblock In {\em Proc. 11th Annu. ACM-SIAM Sympos. Discrete Algorithms
  (SODA)}, pages 494--495, 2000.

\bibitem{goodrich2}
M.~T. Goodrich, M.~Orletsky, and K.~Ramaiyer.
\newblock Methods for achieving fast query times in point location data
  structures.
\newblock In {\em Proc. 8th Annu. ACM-SIAM Sympos. Discrete Algorithms (SODA)},
  pages 757--766, 1997.

\bibitem{redblack}
L.~J. Guibas and R.~Sedgewick.
\newblock A dichromatic framework for balanced trees.
\newblock In {\em Proc. 19th Annu. IEEE Sympos. Found. Comput. Sci. (FOCS)},
  pages 8--21, 1978.

\bibitem{optimum2}
T.~C. Hu and A.~C. Tucker.
\newblock Optimal computer search trees and variable-length alphabetical codes.
\newblock {\em SIAM J. Appl. Math.}, 21:514--532, 1971.

\bibitem{Iacono00}
J.~Iacono.
\newblock Improved upper bounds for pairing heaps.
\newblock In {\em Proc. 7th Scandinavian Workshop on Algorithm Theory (SWAT)},
  volume 1851 of {\em Lecture Notes in Computer Science}, pages 32--45.
  Springer-Verlag, 2000.

\bibitem{Iacono01a}
J.~Iacono.
\newblock Alternatives to splay trees with {$O(\log n)$} worst-case access
  times.
\newblock In {\em Proc. 12th Annu. ACM-SIAM Sympos. Discrete Algorithms
  (SODA)}, pages 516--522, 2001.

\bibitem{Iacono04}
J.~Iacono.
\newblock Expected asymptotically optimal planar point location.
\newblock {\em Comput. Geom. Theory Appl.}, 29(1):19--22, 2004.

\bibitem{Iacono05}
J.~Iacono.
\newblock Key-independent optimality.
\newblock {\em Algorithmica}, 42(1):3--10, 2005.

\bibitem{optimum5}
J.~H. Kingston.
\newblock A new proof of the {G}arsia-{W}achs algorithm.
\newblock {\em J. Algorithms}, 9(1):129--136, 1988.

\bibitem{kirk}
D.~Kirkpatrick.
\newblock Optimal search in planar subdivisions.
\newblock {\em SIAM J. Comput.}, 12(1):28--35, 1983.

\bibitem{optimum1}
D.~E. Knuth.
\newblock Optimum binary search trees.
\newblock {\em Acta Inform.}, 1:14--25, 1971.

\bibitem{optimum9}
J.~F. Korsh.
\newblock Greedy binary search trees are nearly optimal.
\newblock {\em Inform. Process. Lett.}, 13(1):16--19, 1981.

\bibitem{optimum14}
H.-P. Kriegel and V.~K. Vaishnavi.
\newblock Weighted multidimensional {B}-trees used as nearly optimal dynamic
  dictionaries.
\newblock In {\em Proc. 10th Symposium on Mathematical Foundations of Computer
  Science (MFCS)}, volume 118 of {\em Lecture Notes in Computer Science}, pages
  410--417. Springer-Verlag, 1981.

\bibitem{LeePr77}
D.~T. Lee and F.~P. Preparata.
\newblock Location of a point in a planar subdivision and its applications.
\newblock {\em SIAM J. Comput.}, 6(3):594--606, 1977.

\bibitem{LiptonTa79}
R.~J. Lipton and R.~E. Tarjan.
\newblock A separator theorem for planar graphs.
\newblock {\em SIAM J. Appl. Math.}, 36(2):177--189, 1979.

\bibitem{LiptonTa80}
R.~J. Lipton and R.~E. Tarjan.
\newblock Applications of a planar separator theorem.
\newblock {\em SIAM J. Comput.}, 9(3):615--627, 1980.

\bibitem{optimum8}
K.~Mehlhorn.
\newblock Nearly optimal binary search trees.
\newblock {\em Acta Inform.}, 5(4):287--295, 1975.

\bibitem{optimum6}
K.~Mehlhorn.
\newblock A best possible bound for the weighted path length of binary search
  trees.
\newblock {\em SIAM J. Comput.}, 6(2):235--239, 1977.

\bibitem{optimum12}
K.~Mehlhorn.
\newblock Dynamic binary search.
\newblock {\em SIAM J. Comput.}, 8(2):175--198, 1979.

\bibitem{optimum13}
K.~Mehlhorn.
\newblock Arbitrary weight changes in dynamic trees.
\newblock {\em RAIRO Inform. Th\'eor.}, 15(3):183--211, 1981.

\bibitem{Mulmuley90}
K.~Mulmuley.
\newblock A fast planar partition algorithm. {I}.
\newblock {\em J. Symbolic Comput.}, 10(3-4):253--280, 1990.

\bibitem{ppl5}
N.~Sarnak and R.~E. Tarjan.
\newblock Planar point location using persistent search trees.
\newblock {\em Commun. ACM}, 29(7):669--679, 1986.

\bibitem{trapseidel}
R.~Seidel.
\newblock A simple and fast incremental randomized algorithm for computing
  trapezoidal decompositions and for triangulating polygons.
\newblock {\em Comput. Geom. Theory Appl.}, 1(1):51--64, 1991.

\bibitem{weaver}
C.~E. Shannon and W.~Weaver.
\newblock {\em The {M}athematical {T}heory of {C}ommunication}.
\newblock The University of Illinois Press, Urbana, Ill., 1949.

\bibitem{splay}
D.~D. Sleator and R.~E. Tarjan.
\newblock Self-adjusting binary search trees.
\newblock {\em J. ACM}, 32(3):652--686, 1985.

\bibitem{Talib:1996:TME:792755.792810}
A.~Z. B.~H. Talib, M.~Chen, and P.~Townsend.
\newblock Three major extensions to {K}irkpatrick's point location algorithm.
\newblock In {\em Proc. Conference on Computer Graphics International (CGI)},
  pages 112--121, 1996.

\bibitem{optimum11}
K.~Unterauer.
\newblock Dynamic weighted binary search trees.
\newblock {\em Acta Inform.}, 11(4):341--362, 1978/79.

\bibitem{optimum3}
F.~F. Yao.
\newblock Efficient dynamic programming using quadrangle inequalities.
\newblock In {\em Proc. 12th Annu. ACM Sympos. Theory Comput. (STOC)}, pages
  429--435, 1980.

\end{thebibliography}
\newcommand{\SortNoop}[1]{}\def\cprime{$'$}

\end{document}